\newtheorem{claim}{Claim}[section]
\newtheorem{theorem}[claim]{Theorem}
\newtheorem{proposition}[claim]{Proposition}
\newtheorem{definition}[claim]{Definition}
\numberwithin{equation}{section}
\begin{document}
\title[Quotient graphs]{Application of quotient graph theory to three-edge star graphs}
\author{Vladim\'ir Je\v{z}ek} 
\address{Department of Physics, Faculty of Science, University of Hradec Kr\'alov\'e, Rokitansk\'eho 62,
500\,03 Hradec Kr\'alov\'e, Czechia}
\email{vljezek@atlas.cz}

\author{Ji\v{r}\'{\i} Lipovsk\'{y}}
\address{Department of Physics, Faculty of Science, University of Hradec Kr\'alov\'e, Rokitansk\'eho 62,
500\,03 Hradec Kr\'alov\'e, Czechia}
\email{jiri.lipovsky@uhk.cz}

\begin{abstract}
We apply the quotient graph theory described by Band, Berkolaiko, Joyner and Liu to particular graphs symmetric with respect to $S_3$ and $C_3$ symmetry groups.  We find the quotient graphs for the three-edge star quantum graph with Neumann boundary conditions at the loose ends and three types of coupling conditions at the central vertex (standard, $\delta$ and preferred-orientation coupling). These quotient graphs are smaller than the original graph and the direct sum of quotient graph Hamiltonians is unitarily equivalent to the original Hamiltonian.
\end{abstract}

\maketitle

\smallskip
\noindent \emph{Keywords:} quantum graphs; quotient graphs; symmetry group; preferred-orientation coupling.


\section{Introduction}
Symmetry plays an important role in many branches of physics and mathematics. It can be found in many systems in physics, chemistry or biology, as crystals, molecules, living organisms, or the structure of fundamental laws of nature. Its importance lies in simplifying the tasks; even a very difficult problem can be reduced, using its symmetry, and solved significantly more easily. 

The quantum graphs, first used for the description of aromatic molecules in the 1930's \cite{Pauling} and 1950's \cite{RuedenbergScherr}, then widely studied since the 1980's, can serve as a nice example of the importance of symmetry. This model, reasonably simple from the mathematical point of view (set of ordinary differential equations), shows many non-trivial properties and therefore is used as a toy model, e.g. for describing quantum chaos \cite{KottosSmilansky,BerkolaikoBogomolny}. Quantum graphs, however, are not an artificial problem; the Schr\"odinger equation on a network has applications in describing nanotubes, photonic crystals, etc. The mathematical claims on the properties of this quantum problem, do not need quantum theory to be experimentally verified. Using similar forms of the Schr\"odinger and telegraph equation, one can model quantum graphs with the so-called microwave graphs -- the behavior of a quantum particle is replaced by the propagation of microwaves in coaxial cables \cite{HulEtAl, HulEtAl2, LawniczakLipovskySirko, LawniczakKurasov, LawniczakLipovskyBialousSirko}.

Symmetry allows decomposing complicated graphs with many edges into simpler graphs, for which the term \emph{quotient graphs} is used. This is useful e.g. for finding the secular equation for the graph eigenvalues or the resonance condition for resolvent resonances. As we show in Section~\ref{sec:qg}, the secular equation is given by the determinant of a square matrix with the number of rows and columns being double the number of graph edges. Hence reducing the number of edges significantly simplifies the computation. 

The paper~\cite{BandBerkolaikoJoynerLiu} summarized the theory (previously developed in \cite{BandParzanchevskiBen-Shach, BandParzanchevski}) for constructing quotient graphs using the symmetry groups of the graph. Applications to both combinatorial and quantum graphs are provided in the mentioned paper. Using this construction, one may obtain the quotient graphs, each corresponding to one particular irreducible representation of the symmetry group. There are various utilizations of this theory, it was used for simplifying the graph and computing the secular equation e.g. in \cite{ExnerLipovsky1}. The theory was also applied to the construction of quantum graphs providing GSE (Gaussian Symplectic Ensemble) statistics in \cite{JoynerMullerSieber}.

The present paper aims to introduce the quotient graph theory developed in~\cite{BandBerkolaikoJoynerLiu} in a compact form and showing its applications in rather simple, but still non-trivial examples. We focus on quantum graphs quotients only, in particular, we chose equilateral star graphs consisting of three edges. Alternating the coupling condition at the central vertex, we can change the symmetry group of the graph. In detail, we show the construction for the $S_3$ group (the symmetry group for standard and $\delta$-coupling), including the construction of the irreducible representations of this group. Introducing a preferred direction in the graph using a special type of coupling first used in \cite{ExnerTater}, one may reduce the group symmetry to $C_3$. Hence the irreducible representations change and instead of two one-dimensional and one two-dimensional representations we obtain three one-dimensional representations.

The paper is structured as follows. In the next two sections, we give necessary preliminaries needed for stating the theorem of~\cite{BandBerkolaikoJoynerLiu} -- in Section~\ref{sec:group} we introduce the main notions of the group theory, and Section~\ref{sec:qg} is devoted to quantum graphs. In Section~\ref{sec:quotient} we state the procedure from~\cite{BandBerkolaikoJoynerLiu} allowing us to obtain quotients for the quantum graphs. In Section~\ref{sec:three-edge} we apply this theory to three-edge graphs. We obtain the representations of the $S_3$ group and find the kernel space needed in the procedure. Using it, we find in Subsections~\ref{sec:standard} and \ref{sec:delta} the quotient graphs for standard and $\delta$-coupling. Subsection~\ref{sec:preferred} is devoted to the example of the graph with preferred-orientation coupling which is symmetric under the $C_3$ group.  Finally, we conclude the results in Section~\ref{sec:conclusions}. 

\section{Preliminaries about group theory}\label{sec:group}
In this section, we revise necessary notions of the group theory, which allow us to formulate the quotient graph method. We focus mainly on the representation theory for groups. The current paper cannot give a full and detailed description of the field, therefore, we refer the interested reader e.g. to publications \cite{FultonHarris, BarutRaczka, Cotton}. We start with the definition of the group.

\begin{definition}
Group $(G,\cdot)$ is a set $G$ with a binary operation ``$\cdot$'' for which the following properties hold
\begin{enumerate}
\item $G$ is closed with respect to ``$\cdot$'', i.e. $a \cdot b \in G$ for all $a, b \in G$,
\item the operation ``$\cdot$'' is associative, i.e.  $(a \cdot b)\cdot c = a\cdot (b\cdot c)$ for all $a, b, c \in G$,
\item there exists an identity element $e$ for which $e \cdot a = a \cdot e$ for all $a \in G$,
\item for each $a\in G$ there exists the inverse element $a^{-1}$ such that $a^{-1}\cdot a = a\cdot a^{-1} = e$.
\end{enumerate}
\end{definition}

\begin{definition}
Two elements $a, b \in G$ are in the same \emph{conjugacy class} if there is an element $g\in G$, such that $b = g^{-1} \cdot a \cdot g$.
\end{definition}

\begin{definition}
Let $(G,*)$ and $(H,\cdot)$ be two groups. The map $\varphi: G\to H$ is called a \emph{homomorphism} from the group $G$ to $H$ if it satisfies
$$
   \varphi(a * b) = \varphi(a)\cdot \varphi(b)\,.
$$
If a homomorphism is bijective, we call it \emph{isomorphism}. Isomorphism $\varphi: G\to G$ is called \emph{automorphism}.
\end{definition}

Note that ``$*$'' is the group operation in the group $G$ and ``$\cdot$'' is the group operation in the group $H$. The homomorphism, therefore, preserves the group operation. In the following text, we will consider finite groups (groups with a finite number of elements).

\begin{definition}
Let $V$ be a vector space. The representation of the group $(G,*)$ on $V$ is a map $\rho : G\to GL(V)$ such that 
$$
  \rho(a*b) = \rho(a)\cdot \rho(b)
$$
for all $a, b \in G$. Here $GL(V)$ is the general linear group of the vector space $V$, i.e. the group of all automorphisms on $V$. The dimension of the vector space $V$ is called the \emph{dimension of the representation} or its \emph{degree}. The space $V$ is called the \emph{carrier space} of the representation.
\end{definition}

Let us briefly comment on the previous definition. The elements of the linear group $GL(V)$ can be viewed as square matrices; the corresponding group operation is matrix multiplication. This allows us to obtain an equivalent group to $(G,*)$, where the elements of the new group are square $d\times d$ matrices and the group operation is the matrix multiplication; here, $d$ is the dimension of the representation.

\begin{definition}
Let $(G,*)$ be a group and $\rho$ be its representation. A linear subspace $W\subset V$ is called $G$-invariant if $\rho(a) \cdot w \in W$ for all $a\in G$ and all $w\in W$. Here, ``$\cdot$'' is matrix multiplication between the matrix $\rho(a)$ and the finite-dimensional column vector $w$. If $V$ contains a subspace $W\subsetneq V$ with the previously mentioned property, we call the representation $\rho$ \emph{reducible}. Otherwise, it is called \emph{irreducible}.
\end{definition}

The meaning of the previous definition is that if we find a subspace for which all the matrices $\rho(a)$, $a\in G$ map this vector subspace to itself, we have a reducible representation. In other words, there exists a similarity transformation of all the matrices $\rho(a)$ which maps them into matrices of the block type $\begin{pmatrix}D^W & D^{WW'}\\ 0 & D^{W'}\end{pmatrix}$. Moreover, if the block $D^{WW'}=0$, the representation is called decomposable, as the next definition states.

\begin{definition}
The representation $\rho$ is called decomposable if there is a basis in which the matrices $\rho(a)$ are of block diagonal form. Then the subspace $W\subset V$ is called the reducing subspace.
\end{definition}

Each decomposable representation can therefore be written as a direct sum of two (or more) irreducible representations, each of them given by the matrices in blocks.

\begin{definition}
Let $V$ be a finite dimensional vector space over a field $T$, $(G,*)$ a group and $\rho$ a representation of $(G,*)$ on $V$. Then the function $\chi_\rho : G \to T$ defined as
$$
  \chi_\rho(a) = \mathrm{Tr\,}(\rho(a))
$$
for each $a\in G$ is called a \emph{character} of the representation $\rho$. Here, the symbol $\mathrm{Tr}$ denotes the trace of a matrix. By $\chi_\rho(a)$ we mean the character of the element $a$ in the representation $\rho$.
\end{definition}
 
\begin{proposition}\label{thm:conjclass}
The characters of the group elements in the same conjugacy class are the same. 
\end{proposition}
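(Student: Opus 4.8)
The plan is to reduce the statement to the elementary fact that the trace of a matrix is invariant under conjugation. Fix a representation $\rho$ of $G$ on a vector space $V$, and let $a, b \in G$ lie in the same conjugacy class, so that there is a $g \in G$ with $b = g^{-1}\cdot a\cdot g$.

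First I would record the two auxiliary identities $\rho(e) = \mathrm{Id}_V$ and $\rho(g^{-1}) = \rho(g)^{-1}$ for every $g \in G$. Both are immediate consequences of the defining homomorphism property $\rho(x*y) = \rho(x)\cdot\rho(y)$: taking $x = y = e$ gives $\rho(e) = \rho(e)\cdot\rho(e)$, and since $\rho(e)\in GL(V)$ is invertible we may cancel to get $\rho(e) = \mathrm{Id}_V$; taking $x = g$, $y = g^{-1}$ then yields $\rho(g)\cdot\rho(g^{-1}) = \rho(e) = \mathrm{Id}_V$, so $\rho(g^{-1})$ is indeed the inverse matrix of $\rho(g)$.

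Next I would apply $\rho$ to the relation $b = g^{-1}\cdot a\cdot g$ and use the homomorphism property together with the identity just established to obtain
$$
\rho(b) = \rho(g^{-1})\cdot\rho(a)\cdot\rho(g) = \rho(g)^{-1}\cdot\rho(a)\cdot\rho(g).
$$
Finally, taking the trace of both sides and invoking the cyclic property $\mathrm{Tr}(XY) = \mathrm{Tr}(YX)$ with $X = \rho(g)^{-1}$ and $Y = \rho(a)\cdot\rho(g)$ gives
$$
\chi_\rho(b) = \mathrm{Tr}\bigl(\rho(b)\bigr) = \mathrm{Tr}\bigl(\rho(g)^{-1}\cdot\rho(a)\cdot\rho(g)\bigr) = \mathrm{Tr}\bigl(\rho(a)\cdot\rho(g)\cdot\rho(g)^{-1}\bigr) = \mathrm{Tr}\bigl(\rho(a)\bigr) = \chi_\rho(a),
$$
which is exactly the assertion.

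There is essentially no serious obstacle here; this is a purely formal computation. The only points worth stating explicitly rather than assuming are the justification that $\rho(g^{-1})$ coincides with the matrix inverse $\rho(g)^{-1}$ (needed so that the conjugation in $G$ really becomes a matrix conjugation) and the standard cyclicity of the trace. Everything else is a direct substitution.
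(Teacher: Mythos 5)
Your proof is correct and follows essentially the same route as the paper's: conjugation in $G$ becomes matrix conjugation under $\rho$, and cyclicity of the trace finishes the argument. You are somewhat more careful than the paper, which writes $\mathrm{Tr}(g^{-1}\cdot a\cdot g)$ directly without spelling out that $\rho(g^{-1})=\rho(g)^{-1}$, but the underlying idea is identical.
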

\begin{proof}
Clearly, using the properties of the trace, we have 
$$
 \mathrm{Tr}\,(g^{-1}\cdot a \cdot g) = \mathrm{Tr}\,(a) 
$$
which proves the claim.
\end{proof}

\begin{definition}
Let $|G|$ be the number of elements of the group $G$. For characters $\chi_{\rho_1}$, $\chi_{\rho_2}$ of two representations $\rho_1$, $\rho_2$ we define the inner product as
$$
  \left<\chi_{\rho_1},\chi_{\rho_2}\right> := \frac{1}{|G|}\sum_{a\in G} \overline{\chi_{\rho_1}}(a) \chi_{\rho_2}(a)\,.
$$
\end{definition}

We state the following proposition, the proof can be found, e.g. in \cite{FultonHarris}.
\begin{proposition}\label{thm:inner}
The following properties of the inner product hold.
\begin{enumerate}
\item[i)] The representation $\rho$ is irreducible if and only if its character $\chi_\rho$ satisfies $\left<\chi_\rho,\chi_\rho\right> = 1$. 
\item[ii)] Let $V_j$, $j = 1,\dots k$ be the vector spaces associated with the irreducible representations $\rho_j$ and $V$ be the vector space associated with the reducible representation $\pi$ of the group $(G,\cdot)$. Let $V\cong V_1^{\oplus \alpha_1}\oplus V_2^{\oplus \alpha_2}\oplus \dots \oplus V_k^{\oplus \alpha_k}$ (the sign $\cong$ denotes isomorphism, $\oplus$ denotes the direct sum, and $V_1^{\oplus \alpha_1}$ means $\alpha_1$ copies of the vector space $V_1$). Then the multiplicity $\alpha_j$ of the irreducible representation $\rho_j$ in $\pi$ is given by
$$
  \alpha_j = \left<\chi_\pi, \chi_{\rho_j}\right>\,.
$$
\end{enumerate}
\end{proposition}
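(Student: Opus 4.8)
The proof rests entirely on the \emph{Schur orthogonality relations} for characters, so the plan is to establish those first and then read off both statements as short formal consequences. Throughout one works over $\mathbb{C}$ — the appearance of complex conjugation and of $|G|$ in the denominator of the inner product already presupposes this (or at least a splitting field of characteristic zero), which is exactly the setting in which two standard facts are available: \emph{Maschke's theorem}, that every finite-dimensional representation of a finite group is a direct sum of irreducible ones, and \emph{Schur's lemma}, that a $G$-equivariant linear map between two irreducible representations is either zero or an isomorphism, and that a $G$-equivariant endomorphism of an irreducible representation is a scalar multiple of the identity.

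From Schur's lemma I would obtain orthogonality by the usual averaging trick: given irreducible representations $\rho_i$ on $V_i$ and $\rho_j$ on $V_j$ and an arbitrary linear map $A\colon V_i\to V_j$, the averaged map $\tilde A=\frac{1}{|G|}\sum_{g\in G}\rho_j(g)\,A\,\rho_i(g)^{-1}$ is $G$-equivariant; by Schur's lemma it is $0$ when $\rho_i\not\cong\rho_j$ and equals $\frac{\mathrm{Tr}\,A}{\dim V_i}\,\mathrm{id}$ when $i=j$. Letting $A$ run over the matrix units and comparing entries yields the orthogonality of the matrix coefficients, and taking traces yields $\left<\chi_{\rho_i},\chi_{\rho_j}\right>=\delta_{ij}$ for irreducible $\rho_i,\rho_j$ (after noting that the inner product is unchanged if one replaces $\rho_i(g)^{-1}$ by $\rho_i(g)^{*}$, since every representation of a finite group is unitarizable).

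Part (ii) is then immediate: characters are additive over direct sums, so the decomposition $V\cong V_1^{\oplus\alpha_1}\oplus\cdots\oplus V_k^{\oplus\alpha_k}$ gives $\chi_\pi=\sum_{j}\alpha_j\chi_{\rho_j}$, and sesquilinearity of the inner product together with $\left<\chi_{\rho_i},\chi_{\rho_j}\right>=\delta_{ij}$ produces $\left<\chi_\pi,\chi_{\rho_j}\right>=\alpha_j$. For part (i), the forward implication is the case $i=j$ of the orthogonality relation. For the converse, apply Maschke's theorem to write $\rho\cong\bigoplus_j\rho_j^{\oplus\alpha_j}$ with non-negative integers $\alpha_j$; then $\left<\chi_\rho,\chi_\rho\right>=\sum_j\alpha_j^2$, which equals $1$ precisely when exactly one $\alpha_j$ equals $1$ and all others vanish — that is, precisely when $\rho$ is irreducible.

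The main obstacle is really the self-contained derivation of the Schur orthogonality relations: the averaging argument and, beneath it, the inputs of complete reducibility (Maschke's theorem) and Schur's lemma over an algebraically closed field of characteristic zero. Once these are in hand, both items of the proposition are routine, which is precisely why it is reasonable here to cite \cite{FultonHarris} rather than reproduce the argument in full.
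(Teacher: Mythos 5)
Your proposal is correct and is precisely the standard argument (Schur's lemma plus the averaging trick giving character orthogonality, then Maschke's theorem for the converse in part (i)) that the paper itself does not reproduce but instead delegates to \cite{FultonHarris}, where this is the proof given. No discrepancies to report; your remark that the ``if and only if'' in (i) needs an algebraically closed field of characteristic zero is a worthwhile precision that the paper leaves implicit.
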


\begin{definition}\label{def:action}
Let $(G,*)$ be a group with the identity element $e$ and let $S$ be a set. Then the (left) action of $G$ on $S$ is the operation $\circ : G\times S \to S$ satisfying the following three axioms
\begin{enumerate}
\item[a)] $g \circ s \in S$ for all $s\in S$ and $g\in G$, 
\item[b)] $e\circ s = s$ for all $s\in S$,
\item[c)] $g_1\circ (g_2\circ s) = (g_1 * g_2) \circ s$ for all $s\in S$ and $g_1, g_2\in G$.
\end{enumerate}
\end{definition}

In the following text, we will, with small abuse of notation, denote the group action by the same symbol ``$*$'' as the group multiplication.

\section{Preliminaries about quantum graphs}\label{sec:qg}
We briefly introduce the usual description of quantum graphs. For more details, we refer the reader to the publications \cite{BerkolaikoKuchment, GnutzmannSmilansky}. 

Let us consider a metric graph consisting of $|\mathcal{V}|$ vertices and $|\mathcal{E}|$ edges $e_j$ of finite lengths $\ell_j$, $j = 1,\dots, |\mathcal{E}|$ that connect two vertices. The vertex set is denoted by $\mathcal{V}$ and the edge set by $\mathcal{E}$. We consider the Hilbert space $\mathcal{H} = \oplus_{j = 1}^{\mathcal{|E|}} L^2(e_j)$. In this Hilbert space we define a second-order differential operator $H$ acting as $H = -\frac{\mathrm{d}^2}{\mathrm{d}x^2}$ with the domain consisting of the functions with edge components in the Sobolev space $W^{2,2}(e_j)$ satisfying the coupling conditions at each vertex $\mathcal{X}_s\in \mathcal{V}$ with the degree (valency) $d_s$
\begin{equation}
  A_s \Psi_s + B_s\Psi'_s = 0\,,\label{eq:qg:coup1}
\end{equation}
where $A_s$ and $B_s$ are $d_s\times d_s$ matrices satisfying $A_s \cdot B_s^\dagger = B_s \cdot A_s^\dagger$ ($\dagger$ denotes the hermitian conjugation) and the joined rectangular matrix $(A_s,B_s)$ has maximal rank. The vector $\Psi_s$ is the vector of the limiting values of functions at the vertex $\mathcal{X}_s$ from the edges incident to this vertex and $\Psi_s'$ is a similarly defined vector of outgoing derivatives. 

The coupling on the whole graph can be described by the $2|\mathcal{E}|\times 2|\mathcal{E}|$ matrices $A$ and $B$ that can be obtained from block matrices consisting of $A_s$ and $B_s$, respectively, after a transformation that interchanges rows and columns. The coupling conditions \eqref{eq:qg:coup1} can be written in one equation 
\begin{equation}
  A \Psi + B\Psi'= 0\,. \label{eq:qg:coup2}
\end{equation}
Here, the vectors are
\begin{eqnarray*}
\Psi = (f_1(0),f_1(\ell_1), f_2(0), f_2(\ell_2),\dots, f_{\mathcal{|E|}(\ell_{\mathcal{|E|}})})^\mathrm{T}\,,\\
\Psi' = (f_1'(0),-f_1'(\ell_1), f_2'(0), -f_2'(\ell_2),\dots, -{f_{\mathcal{|E|}}'(\ell_{\mathcal{|E|}})})^\mathrm{T}\,,\\
\end{eqnarray*}
where $f_j$ are the components of the wavefunction on the edges of the graph.

The operator defined in the above manner is the Hamiltonian of a quantum particle on the graph in the set of units with $\frac{\hbar^2}{2m} = 1$ which moves freely on the graph edges and interacts only at the vertices. The properties of the matrices $A_s$ and $B_s$ ensure that the Hamiltonian is self-adjoint. Similarly, the matrices $A$ and $B$ satisfy $A \cdot B^\dagger = B \cdot A^\dagger$ and the maximal-rank condition. There is an alternative description of the coupling conditions using a unitary matrix $U$ (unitarity means the condition $U\cdot U^\dagger = U^\dagger U  = I$, where $I$ is an identity matrix). Since the whole equation \eqref{eq:qg:coup2} can be multiplied by a~regular square matrix from the left without changing the coupling condition, one can choose $A = C(U-I)$, $B = i C (U+I)$ with $C$ being a regular $2\mathcal{|E|}\times 2\mathcal{|E|}$ square matrix. Unitarity of $U$ results in satisfying the conditions on $A$ and $B$ and the Hamiltonian is therefore self-adjoint.

From the mathematical point of view, a quantum graph is a set of ordinary differential equations (ODE) coupled by vertex conditions. When finding the spectrum of the graph, one has to solve the eigenvalue equation $-f_j''(x) = k^2 f_j(x)$ at each edge of the graph. It follows from the ODE theory that the solutions $f_j$ can be found in the form $f_j(x) = a_j\sin{(kx)}+ b_j\cos{(kx)}$. Thus the energies $k^2$ can be found when one substitutes the above form of the wavefunctions into the coupling condition \eqref{eq:qg:coup2}, and constructs the secular equation given by vanishing the determinant of the matrix multiplying the vector of coefficients $(a_1,b_1,\dots, a_\mathcal{|E|}, b_\mathcal{|E|})^\mathrm{T}$.

\section{Quotient graph theory}\label{sec:quotient}
The procedure for obtaining the quotient graphs from the original quantum graph was described in \cite{BandBerkolaikoJoynerLiu}. We briefly describe its main concepts; for the proof and more insight, we refer to the mentioned publication. 

First, we introduce the Kronecker product.

\begin{definition}
The Kronecker product of two matrices $C$ ($m\times n$ matrix) and $D$ ($p\times q$ matrix) is a $mp \times nq$ matrix given by
$$
  C\otimes D  := \begin{pmatrix}c_{11} D & c_{12} D & \dots & c_{1n} D\\
								c_{21} D & c_{22} D & \dots & c_{2n} D\\
								\vdots & \vdots & \ddots & \vdots\\
								c_{m1} D & c_{m2} D & \dots & c_{mn} D
\end{pmatrix} \,,
$$
where $c_{ij}$, $i =1,\dots, m$, $j = 1,\dots, n$ are the entries of the matrix $C$. In the above equation there are denoted the $p\times q$ blocks of the resulting matrix.
\end{definition}

Secondly, we introduce the notion of a $\pi$-symmetric graph. Let us consider a quantum graph $\Gamma$ with finitely many finite edges $e_j$. Let $(G,*)$ be the symmetry group of the graph $\Gamma$ which maps each edge $e_j$ to another edge $g*e_j$, where ``$*$'' now denotes the group action on a set. The edge $g* e_j$ may or may not be the same one, however, we assume that $G$ does not map any edge to its reverse. In that case, we would introduce a vertex with the standard condition in the middle of this edge and thus dividing it into two. 

\begin{definition}
Let $\pi: G\to GL(\mathbb{C}^{|\mathcal{E}|})$ be a representation of a group $(G,*)$ such that for each $g\in G$ the matrix $\pi(g)$ is a permutation matrix. The graph $\Gamma$ is $\pi$-symmetric if the following two conditions hold
\begin{enumerate}
\item For each $g\in G$ and each $j=1,\dots, |\mathcal{E}|$ and the index $i$ given by $e_i = g* e_j$, the condition $\ell_{j} = \ell_{i}$ holds. 
\item The coupling condition \eqref{eq:qg:coup2} for the coupling matrices $A$ and $B$ is satisfied iff this coupling condition is satisfied for each $g\in G$ for the coupling matrices $A\cdot \hat \pi(g)$ and $B\cdot \hat \pi(g)$ (dot denotes matrix multiplication), where $\hat \pi(g) = \pi(g)\otimes I_2$ (here $I_2$ denotes the $2\times 2$ identity matrix).
\end{enumerate}
\end{definition}

The previous definition allows us to define the action $\pi(g)$ on the vector of edge components of the function $f\in W^{2,2}(\Gamma)$ as
\begin{equation}
  \pi(g) \begin{pmatrix}f_{e_1}\\f_{e_2}\\ \vdots\\f_{e_{|\mathcal{E}|}}\end{pmatrix} = \begin{pmatrix}f_{g^{-1}*e_1}\\f_{g^{-1}*e_2}\\ \vdots\\f_{g^{-1}*e_{|\mathcal{E}|}}\end{pmatrix}\,.\label{eq:pi}
\end{equation}

The following definition of the kernel space will be useful for defining the quotient graph.

\begin{definition}
Let $\Gamma$ be a graph with the symmetry given by the symmetry group $(G,*)$, let $\pi$ be the permutation representation defined by \eqref{eq:pi} and let $\rho$ be an irreducible representation of $G$ with the dimension $r$. Then the kernel space associated with $\rho$ is defined as
\begin{equation}
  K_G(\rho,\pi) := \bigcap_{g\in G} \mathrm{Ker}\,[I_r \otimes \pi(g)-\rho(g)^\mathrm{T}\otimes I_{|\mathcal{E}|}]\,. \label{eq:kernel}
\end{equation}
Here, $\mathrm{Ker}$ denotes the kernel of the space in the parentheses, $I_r$ and $I_{|\mathcal{E}|}$ the $r\times r$ and $|\mathcal{E}| \times |\mathcal{E}|$ identity matrices, respectively, and $\mathrm{T}$ the transpose of a matrix.
\end{definition}

The following, slightly technical definition, introduces matrices needed in quotient graph construction.

\begin{definition}
We define the orbits $O_i := \{e_j\in \mathcal{P}: e_j = g*e_i \ \mathrm{for\ some}\ g\in G\}$. Let $\mathbf{e}_j$ be the standard basis of vectors in $\mathbb{C}^{|\mathcal{E}|}$ (do not confuse with the edges $e_j$). We define the space $X_i$ as the span of $\{\mathbf{e}_j: e_j\in O_i\}$. We define the set $\mathcal{D} = \{e_{j_1}, \dots, e_{j_{|\mathcal{D}|}}\}$ the set of edges so that each $e_{j_i}$ is one representative for each orbit $O_i$, hence $|\mathcal{D}|$ is the number of orbits. Let $V_\rho$ denote the carrier space of $\rho$. Then we define the subspaces $K_G^i(\rho,\pi) := K_G(\rho,\pi)\cap[V_\rho\otimes X_i]$, $i = 1, \dots, |\mathcal{D}|$. Let $d_i := \mathrm{dim\,}K_G^i(\rho,\pi)$ Let $\Theta_i$ be the matrices consisting of columns of vectors in the orthonormal basis of $K_G^i(\rho,\pi)$ for $i = 1, \dots, |\mathcal{D}|$. Finally, we define the matrices $\Theta := (\Theta_1, \Theta_2, \dots, \Theta_{|\mathcal{D}|})$ and $\hat \Theta := \Theta\otimes I_2$.
\end{definition}

Finally, we arrive at the definition of a quotient graph and at the main theorem.

\begin{definition}\label{def:band}
Let $\Gamma$ be a finite quantum graph with $|\mathcal{E}|$ edges and the coupling conditions \eqref{eq:qg:coup2} given by the matrices $A$ and $B$, which has the symmetry given by the group $(G,*)$. Then the quotient graph Hamiltonian $H_\rho$ corresponding to irreducible representation $\rho$ contained in the representation $\pi$ of $(G,*)$ of dimension $r$ is defined in the following way. It is given by the operator acting as negative second derivative on a graph $\Gamma_\rho$ consisting of the edges $\{e_{i,j}\}$ with $i\in \mathcal{D}$, $j = 1, \dots, d_i$ of the length $\ell_i$ (the edge length of the former edge $e_i$). The domain of the Hamiltonian on $\Gamma_\rho$ are functions in the Sobolev space $W^{2,2}(\Gamma_\rho)$ satisfying the coupling conditions given by the matrices
$$
  A_\rho := \hat \Theta^\dagger [I_r\otimes \tilde A] \hat\Theta\,,\quad B_\rho := \hat \Theta^\dagger [I_r\otimes \tilde B] \hat\Theta\,,
$$
where $\dagger$ denotes the hermitian conjugation and $\tilde A := (A+iB)^{-1}A$, $\tilde B := (A+iB)^{-1}B$. 
\end{definition}

\begin{theorem}\label{thm:band} (Band, Berkolaiko, Joyner, Liu)\\
The original Hamiltonian $H$ on the graph $\Gamma$ is unitarily equivalent to the direct sum over all irreducible representations of $G$ contained in $\pi$.
$$
  H \cong \bigoplus_{\rho}  H_\rho^{\oplus r(\rho)}.
$$
Here $r(\rho)$ is the dimension of the representation $\rho$ and $H_\rho^{\oplus r(\rho)}$ denotes $r(\rho)$ copies of the quotient graph operator $H_\rho$.
\end{theorem}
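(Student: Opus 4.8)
Since the full proof is given in~\cite{BandBerkolaikoJoynerLiu}, we only outline the strategy. The plan is to exhibit $G$ as a unitary symmetry of $H$, to decompose $\mathcal{H}$ into isotypic components under this symmetry, to apply Schur's lemma inside each component, and finally to identify the resulting block with the quotient Hamiltonian. First, condition~(1) of $\pi$-symmetry (equal edge lengths along each orbit) together with the no-reversal hypothesis makes \eqref{eq:pi} a well-defined unitary $\pi(g)$ on $\mathcal{H}=\bigoplus_j L^2(e_j)$ for each $g\in G$, and $g\mapsto\pi(g)$ is a unitary representation. Since $-\mathrm{d}^2/\mathrm{d}x^2$ has the same form on every edge, $\pi(g)$ commutes with the differential expression, while condition~(2) ensures that $\pi(g)$ maps functions obeying $A\Psi+B\Psi'=0$ to functions obeying the same condition, hence preserves $\mathrm{dom}(H)$; therefore every $\pi(g)$ commutes with the self-adjoint operator $H$.

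Next, let $P_\rho=\frac{r(\rho)}{|G|}\sum_{g\in G}\overline{\chi_\rho(g)}\,\pi(g)$ be the orthogonal projection onto the $\rho$-isotypic subspace $\mathcal{H}_\rho:=P_\rho\mathcal{H}$. Finite-group representation theory gives $\mathcal{H}=\bigoplus_\rho\mathcal{H}_\rho$, and since $H$ commutes with every $\pi(g)$ it commutes with every $P_\rho$, so $H=\bigoplus_\rho H|_{\mathcal{H}_\rho}$. By Schur's lemma $\mathcal{H}_\rho\cong V_\rho\otimes M_\rho$ with multiplicity space $M_\rho=\mathrm{Hom}_G(V_\rho,\mathcal{H})$, the group acting as $\rho(g)\otimes I$, and $H$ --- being $G$-equivariant --- acting as $I_{V_\rho}\otimes\tilde H_\rho$ for a single self-adjoint operator $\tilde H_\rho$ on $M_\rho$; consequently $H|_{\mathcal{H}_\rho}\cong\tilde H_\rho^{\oplus r(\rho)}$. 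It then remains to identify $\tilde H_\rho$ with $H_\rho$ of Definition~\ref{def:band}.

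For this, observe that an element of $M_\rho=\mathrm{Hom}_G(V_\rho,\mathcal{H})$ is the same as a $G$-invariant vector in the carrier space of $\rho$ tensored with $\mathcal{H}=\bigoplus_j L^2(e_j)$; invariance couples the edge components within each orbit $O_i$, so such an element is determined by its component on the chosen representative $e_{j_i}$, which for almost every $x$ must lie in the subspace of $V_\rho$ fixed by the stabilizer of $e_{j_i}$ (the stabilizer acting trivially on the edge itself by the no-reversal assumption). That subspace can be identified with $K_G^i(\rho,\pi)$, of dimension $d_i$, with orthonormal basis given by the columns of $\Theta_i$. Hence $M_\rho$ is unitarily $\bigoplus_{i\in\mathcal{D}}\mathbb{C}^{d_i}\otimes L^2(e_i)=L^2(\Gamma_\rho)$, matching the edge structure of $\Gamma_\rho$ in Definition~\ref{def:band}, and under this identification $\tilde H_\rho$ acts as $-\mathrm{d}^2/\mathrm{d}x^2$ on $\Gamma_\rho$. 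Finally one transfers the vertex conditions: since left-multiplying \eqref{eq:qg:coup2} by $(A+iB)^{-1}$ (invertible because $A+iB=-2C$ in the parametrization $A=C(U-I)$, $B=iC(U+I)$) does not change the coupling, one may use $(\tilde A,\tilde B)$, and for a $\rho$-equivariant function the full vertex data on $\Gamma$ is obtained from the reduced data $\Psi_\rho,\Psi_\rho'$ on $\Gamma_\rho$ through $\hat\Theta=\Theta\otimes I_2$; substituting into $[I_r\otimes\tilde A]\Psi+[I_r\otimes\tilde B]\Psi'=0$, collapsing the $|G|$-fold redundancy to one copy per orbit, and compressing by $\hat\Theta^\dagger$ produces precisely $A_\rho\Psi_\rho+B_\rho\Psi_\rho'=0$, while $(A_\rho,B_\rho)$ inherits $A_\rho B_\rho^\dagger=B_\rho A_\rho^\dagger$ and maximal rank from $(\tilde A,\tilde B)$ and from $\hat\Theta^\dagger\hat\Theta=I$, so $H_\rho$ is a genuine self-adjoint quantum-graph Hamiltonian and $\tilde H_\rho\cong H_\rho$.

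The main difficulty lies in this last step. The isotypic decomposition and the orbit-by-orbit count are soft consequences of finite-group representation theory, but verifying that restricting the coupling condition of $\Gamma$ to $\rho$-equivariant vertex data and compressing it by $\hat\Theta$ reproduces exactly $(A_\rho,B_\rho)$ --- in particular that the compression by $\hat\Theta^\dagger$ is faithful (the operators $I_r\otimes\tilde A$ and $I_r\otimes\tilde B$ must preserve the range of $\hat\Theta$, a feature encoded in the definition of $K_G(\rho,\pi)$) and that maximal rank is neither created nor lost --- is the delicate part, and it is where the technical objects $K_G^i(\rho,\pi)$, $\Theta_i$ and the normalization $\tilde A,\tilde B$ enter in full; for the complete verification we refer to~\cite{BandBerkolaikoJoynerLiu}.
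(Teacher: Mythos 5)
The paper itself gives no proof of this theorem: it is quoted from \cite{BandBerkolaikoJoynerLiu} with an explicit referral to that work for the proof, which is also where your outline ultimately defers the technical verification. Your sketch (unitarity of $\pi(g)$ and its commutation with $H$, isotypic decomposition, Schur's lemma yielding $H|_{\mathcal{H}_\rho}\cong\tilde H_\rho^{\oplus r(\rho)}$, identification of the multiplicity space with $L^2(\Gamma_\rho)$ via the intertwiner spaces $K_G^i(\rho,\pi)$, and transfer of the vertex conditions through $\hat\Theta$ and $(\tilde A,\tilde B)$) is a correct reconstruction of the strategy of the cited proof, so it is consistent with the paper's treatment.
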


The second part of the theorem says that one has to take $r$ copies of the quotient graph corresponding to the representation $\rho$.

\section{The three-edge graph}\label{sec:three-edge}
We will apply the method introduced in the previous section to a particular graph. We consider a star graph consisting of three edges of the same length~$\ell$ with the same boundary conditions at the loose ends and a symmetric coupling condition at the central vertex. Later, we will introduce the coupling conditions; we will consider Neumann boundary conditions at the loose ends and three versions of the coupling condition at the central vertex. However, the first two quantum graphs are symmetric under the group $S_3$ -- the group of permutations of three elements; the third one has $C_3$ symmetry. 

\subsection{Representations of the group $S_3$}
Let us start by describing the group $S_3$. It consists of six elements: the identity element is the permutation that keeps all the edges, there are three permutations interchanging two edges and two which cyclically interchange all three edges. We will employ the notation $[ijk]$ for a permutation $g$ for which $g(1) = i$, $g(2) = j$, $g(3) = k$. In Table~\ref{tab:elements} we list all the permutations (group elements of the considered group $(G,*)$) and their inverse elements. There are three conjugacy classes, consisting of one, three, and two elements; in the table, these conjugacy classes are separated by a double vertical line.

\begin{table}[h]
\begin{tabular}{|c||c||c|c|c||c|c|}
\hline
$g$ & [123] & [213] & [321] & [132] & [231] & [312]\\ \hline
$g^{-1}$ & [123] & [213] & [321] & [132] & [312] & [231]\\ \hline
\end{tabular}
\caption{Elements of the group $S_3$ and their inverses.}
\label{tab:elements}
\end{table}

We give the form of the representation $\pi$ defined by \eqref{eq:pi}. The representation $\pi$ at each element is a $3\times 3$ permutation matrix. One can notice that for an element $[ijk]$, ones are in the first column and $i$-th row, second column and $j$-th row and in the third column and the $k$-th row, the other entries of the matrix are zero. The representation $\pi$ is for the group $S_3$ usually called the \emph{defining representation}.

\begin{eqnarray}
\pi([123]) = \begin{pmatrix}1 & 0 & 0\\ 0 & 1 & 0\\ 0 & 0 & 1\end{pmatrix}\,,\quad \pi([213]) = \begin{pmatrix}0 & 1 & 0\\ 1 & 0 & 0\\ 0 & 0 & 1\end{pmatrix}\,,\quad \pi([321]) = \begin{pmatrix}0 & 0 & 1\\ 0 & 1 & 0\\ 1 & 0 & 0\end{pmatrix}\,, \nonumber \\
\pi([132]) = \begin{pmatrix}1 & 0 & 0\\ 0 & 0 & 1\\ 0 & 1 & 0\end{pmatrix}\,,\quad \pi([231]) = \begin{pmatrix}0 & 0 & 1\\ 1 & 0 & 0\\ 0 & 1 & 0 \end{pmatrix}\,,\quad \pi([312]) = \begin{pmatrix}0 & 1 & 0\\ 0 & 0 & 1\\ 1 & 0 & 0\end{pmatrix}\,.\label{eq:defrep}
\end{eqnarray}

We leave for the reader to check that this really is a representation, i.e. that $\pi(g_1 * g_2) = \pi(g_1)\cdot \pi(g_2)$, where star denotes the group operation in the group $G$ (composition of permutations) and dot denotes matrix multiplication.

The next step will be finding the irreducible representations of the group $S_3$. Although the procedure can be found in the literature (e.g. \cite{FultonHarris}), for the reader's convenience we state it here as well.  Any group has the so-called \emph{trivial representation}, which is a one-dimensional representation assigning to all the elements number 1. One can easily prove that the one-dimensional representation assigning 1 to all even permutations and $-1$ to all odd permutations is also a representation of the group $S_3$. We will call it the \emph{signum representation}. The most difficult task will be to find the third irreducible representation, the \emph{orthogonal representation}, later we will find that it is a two-dimensional one.

First, we give the table of characters of the representations (see Table~\ref{tab:characters}). For the defining representation, the characters are obtained as the traces of the matrices in \eqref{eq:defrep}. For the one-dimensional representations (the trivial and signum representations) the characters are identical with the $1\times 1$ matrices of the representations. Notice that according to Proposition~\ref{thm:conjclass} the characters of the elements in the same conjugacy class are the same. In the following paragraphs, we comment on how the characters of the elements of the orthogonal representation are obtained and therefore how the last row of Table~\ref{tab:characters} is found.

\begin{table}[h]
\begin{tabular}{|c||c||c|c|c||c|c|}
\hline
group element & [123] & [213] & [321] & [132] & [231] & [312]\\ \hline
defining representation & 3 & 1 & 1 & 1 & 0 & 0\\ \hline
trivial representation & 1 & 1 & 1 & 1 & 1 & 1\\ \hline
signum representation & 1 & $-1$ & $-1$ & $-1$ & 1 & 1\\ \hline
orthogonal representation & 2 & 0 & 0 & 0 & $-1$ & $-1$\\ \hline
\end{tabular}
\caption{Characters of the representations.}
\label{tab:characters}
\end{table}

Let us now show that the trivial and signum representations are irreducible and that the defining representation is not. We will use Proposition~\ref{thm:inner}. The inner products are
\begin{eqnarray*}
\left<\chi_{\mathrm{def}},\chi_{\mathrm{def}}\right> &=& \frac{1}{6}(1\cdot 3\cdot 3+3\cdot 1\cdot 1+2\cdot 0 \cdot 0) = 2\,,\\
\left<\chi_{\mathrm{triv}},\chi_{\mathrm{triv}}\right> &=& \frac{1}{6}(1\cdot 1\cdot 1+3\cdot 1\cdot 1+2\cdot 1 \cdot 1) = 1\,,\\
\left<\chi_{\mathrm{sign}},\chi_{\mathrm{sign}}\right> &=& \frac{1}{6}(1\cdot 1\cdot 1+3\cdot (-1)\cdot (-1)+2\cdot 1 \cdot 1) = 1\,.
\end{eqnarray*}
The inner products for the trivial and signum representations are equal to 1, therefore these representations are irreducible, the defining representation is not. 

Now we find the multiplicities of the trivial and signum representations in the defining representation.
\begin{eqnarray*}
\left<\chi_{\mathrm{def}},\chi_{\mathrm{triv}}\right> &=& \frac{1}{6}(1\cdot 3\cdot 1+3\cdot 1\cdot 1+2\cdot 0 \cdot 1) = 1\,,\\
\left<\chi_{\mathrm{def}},\chi_{\mathrm{sign}}\right> &=& \frac{1}{6}(1\cdot 3\cdot 1+3\cdot 1\cdot (-1)+2\cdot 0 \cdot 1) = 0\,.
\end{eqnarray*}
We can see that the multiplicity of the trivial representation in the defining representation is 1, while the signum representation is not contained in the defining representation. Hence we define the orthogonal representation as the complement of the trivial representation in the defining representation and we have $\chi_{\mathrm{orth}} = \chi_{\mathrm{def}} - \chi_{\mathrm{triv}}$. This equation gives the last row in Table~\ref{tab:characters}. One can simply verify that the orthogonal representation is irreducible.
$$
\left<\chi_{\mathrm{orth}},\chi_{\mathrm{orth}}\right> = \frac{1}{6}(1\cdot 2\cdot 2+3\cdot 0\cdot 0+2\cdot (-1) \cdot (-1)) = 1\,.
$$

We proceed by finding the matrices of the orthogonal representation; the procedure was described, e.g., in \cite{unapologetic, Diaconis}. Since the trivial representation is contained in the defining representation with multiplicity one, we write the defining representation on a certain basis of the orthogonal complement of the subspace corresponding to the trivial representation. We use the following basis of the $\mathbb{R}^3$ space.
$$
  \mathbf{f}_1 = \mathbf{e}_1 + \mathbf{e}_2 + \mathbf{e}_3 = \begin{pmatrix}1\\1\\1\end{pmatrix}\,,\quad \mathbf{f}_2 = \mathbf{e}_2 - \mathbf{e}_1 = \begin{pmatrix}-1\\ 1\\ 0\end{pmatrix}\,,\quad  \mathbf{f}_2 = \mathbf{e}_3 - \mathbf{e}_1 = \begin{pmatrix}-1\\ 0\\ 1\end{pmatrix}\,.
$$

Let us show the construction for the group element $[321]$, which interchanges the first and the third edge.
\begin{eqnarray*}
\pi([321]) \mathbf{f}_1 &=& \pi([321])(\mathbf{e}_1+\mathbf{e}_2+\mathbf{e}_3) = \begin{pmatrix}0 & 0 & 1\\ 0 & 1 & 0\\ 1 & 0 & 0\end{pmatrix} \begin{pmatrix}1\\1\\1\end{pmatrix} = \begin{pmatrix}1\\1\\1\end{pmatrix} = \mathbf{e}_1+\mathbf{e}_2+\mathbf{e}_3 = \mathbf{f}_1\,.\\
\pi([321]) \mathbf{f}_2 &=& \pi([321])(\mathbf{e}_2-\mathbf{e}_1) = \begin{pmatrix}0 & 0 & 1\\ 0 & 1 & 0\\ 1 & 0 & 0\end{pmatrix} \begin{pmatrix}-1\\1\\0\end{pmatrix} = \begin{pmatrix}0\\1\\-1\end{pmatrix} =\\
&=& \mathbf{e}_2-\mathbf{e}_3 = \mathbf{e}_2- \mathbf{e}_1 - (\mathbf{e}_3- \mathbf{e}_1) = \mathbf{f}_2-\mathbf{f}_3\,.\\
\pi([321]) \mathbf{f}_3 &=& \pi([321])(\mathbf{e}_3-\mathbf{e}_1) = \begin{pmatrix}0 & 0 & 1\\ 0 & 1 & 0\\ 1 & 0 & 0\end{pmatrix} \begin{pmatrix}-1\\0\\1\end{pmatrix} = \begin{pmatrix}1\\0\\-1\end{pmatrix} = \mathbf{e}_1-\mathbf{e}_3 = -\mathbf{f}_3\,.\\
\end{eqnarray*}

If we write the action of this group element in the basis $\mathbf{f}_1$, $\mathbf{f}_2$, $\mathbf{f}_3$, we obtain the matrix $\begin{pmatrix}1 & 0 & 0\\ 0 & 1 & 0\\ 0 & -1 & -1\end{pmatrix}$. Since the subspace corresponding to the trivial representation is the span of $\mathbf{f}_1$, the restriction to the orthogonal space $\mathrm{span\,}\{\mathbf{f}_2, \mathbf{f}_3\}$ gives $\rho_{\mathrm{orth}}([321]) = \begin{pmatrix}1 & 0 \\ -1 & -1\end{pmatrix}$. Similarly, we can obtain the other matrices of the two-dimensional orthogonal representation.

\begin{eqnarray*}
\rho_{\mathrm{orth}}([123]) = \begin{pmatrix}1 & 0\\ 0 & 1 \end{pmatrix}\,,\quad \rho_{\mathrm{orth}}([213]) = \begin{pmatrix}-1 & -1\\ 0 & 1 \end{pmatrix}\,,\quad \rho_{\mathrm{orth}}([321]) = \begin{pmatrix}1 & 0 \\ -1 & -1 \end{pmatrix}\,,\\ \nonumber
\rho_{\mathrm{orth}}([132]) = \begin{pmatrix} 0 & 1\\ 1 & 0\end{pmatrix}\,,\quad \rho_{\mathrm{orth}}([231]) = \begin{pmatrix}-1 & -1\\ 1 & 0 \end{pmatrix}\,,\quad \rho_{\mathrm{orth}}([312]) = \begin{pmatrix}0 & 1 \\ -1 & -1\end{pmatrix}\,.
\end{eqnarray*}

\subsection{Application of the quotient graph theory}\label{sec:procedure}
In this subsection, we obtain the kernel space $K_{G}(\rho,\pi)$ and the corresponding matrices $\Theta$ and $\hat \Theta$ corresponding to irreducible representations of the group $S_3$. This part of the quotient graph theory does not depend on the coupling conditions, only on the symmetry group. 

Let us start with the orthogonal representation. We obtain the kernel space according to the equation~\eqref{eq:kernel}. Since the representation is two-dimensional, we will use $r=2$. The number of edges of the graph is $|\mathcal{E}| = 3$. We will show the construction of the kernel for the group element $[321]$ in detail and then list the results for other elements.
\begin{eqnarray*}
  I_2 \otimes \pi([321]) - \rho_{\mathrm{orth}}^\mathrm{T}([321])\otimes I_3 &=& \begin{pmatrix}1& 0\\ 0 &1\end{pmatrix} \cdot \begin{pmatrix}0 & 0 & 1\\ 0& 1 & 0\\ 1 & 0 & 0\end{pmatrix} - \begin{pmatrix}1 & -1\\ 0 & -1\end{pmatrix} \cdot \begin{pmatrix}1 & 0 & 0\\ 0 & 1 & 0\\ 0 & 0 & 1\end{pmatrix} = 
\\
&=& \begin{pmatrix}0 & 0 & 1 & 0 & 0 & 0\\0 & 1 & 0 & 0 & 0 & 0\\ 1 & 0 & 0 & 0 & 0 & 0\\ 0 & 0 & 0 & 0 & 0 & 1\\ 0 & 0 & 0 & 0 & 1 & 0\\ 0 & 0 & 0 & 1 & 0 & 0\end{pmatrix} - \begin{pmatrix}1 & 0 & 0 & -1 & 0 & 0\\0 & 1 & 0 & 0 & -1 & 0\\0 & 0 & 1 & 0 & 0 & -1\\ 0 & 0 & 0 & -1 & 0 & 0\\0 & 0 & 0 & 0 & -1 & 0\\0 & 0 & 0 & 0 & 0 & -1\end{pmatrix} =
\\
& = &  \begin{pmatrix}-1 & 0 & 1 & 1 & 0 & 0\\ 0 & 0 & 0 & 0 & 1 & 0\\ 1 & 0 & -1 & 0 & 0 & 1\\ 0 & 0 & 0 & 1 & 0 & 1\\ 0 & 0 & 0 & 0 & 2 & 0\\  0 & 0 & 0 & 1 & 0 & 1\end{pmatrix}\,.
\end{eqnarray*}
Hence we find that $\mathrm{Ker\,}(I_2 \otimes \pi([321]) - \rho_{\mathrm{orth}}^\mathrm{T}([321])\otimes I_3)$ is composed of the vectors $(a_1, a_2, \dots, a_6)^{\mathrm{T}}$ that satisfy 
$$
  a_5 = 0\,,\quad a_4 = -a_6\,,\quad -a_1 + a_3+ a_4 = 0\,.
$$

If we write down the conditions for the above kernels for the other group elements, we find that the space $K_{G}(\rho_{\mathrm{orth}},\pi)$ consists of vectors $(a_1, a_2, \dots, a_6)^{\mathrm{T}}$ satisfying 
$$
  a_1 = -a_2 = a_4 = -a_6\,,\quad a_3 = 0\,,\quad a_5=0\,.
$$
Therefore, the kernel space is the span of the vector $(1,-1,0, 1, 0 -1)^\mathrm{T}$. To obtain the matrix $\Theta$, we have to normalize this vector. We have
$$
  \Theta_{\mathrm{orth}} = \begin{pmatrix}\frac{1}{2}\\-\frac{1}{2}\\0\\\frac{1}{2}\\0\\-\frac{1}{2}\end{pmatrix}\,,\quad \hat \Theta_{\mathrm{orth}} = \Theta_{\mathrm{orth}}\otimes I_2 = \begin{pmatrix}\frac{1}{2}\\-\frac{1}{2}\\0\\\frac{1}{2}\\0\\-\frac{1}{2}\end{pmatrix} \otimes \begin{pmatrix}1 & 0 \\ 0 & 1\end{pmatrix} = \begin{pmatrix}\frac{1}{2} & 0\\ 0 & \frac{1}{2}\\- \frac{1}{2} & 0 \\ 0 & -\frac{1}{2}\\0 & 0 \\ 0 & 0 \\ \frac{1}{2} & 0 \\ 0 & \frac{1}{2}\\ 0 & 0 \\ 0 & 0\\ -\frac{1}{2} & 0 \\ 0 &- \frac{1}{2}\end{pmatrix}\,.
$$

Now we continue with the trivial representation. Now $r=1$. Let us again show the construction for the group element $[321]$. 
\begin{eqnarray*}
  I_1 \otimes \pi([321]) - \rho_{\mathrm{triv}}^\mathrm{T}([321])\otimes I_3 &=& 1\otimes \begin{pmatrix}0 & 0 & 1\\ 0 & 1 & 0\\ 1 & 0 & 0\end{pmatrix} - 1\otimes \begin{pmatrix}1 & 0 & 0\\ 0 & 1 & 0\\ 0 & 0 & 1\end{pmatrix} = \begin{pmatrix}-1 & 0 & 1\\ 0 & 0 & 0 \\ 1 & 0 & -1\end{pmatrix}\,.
\end{eqnarray*}
Hence $\mathrm{Ker\,}(I_1 \otimes \pi([321]) - \rho_{\mathrm{triv}}^\mathrm{T}([321])\otimes I_3)$ consists of vectors $(a_1, a_2, a_3)^{\mathrm{T}}$ satisfying $a_1 -a_3 = 0$.

From the other group elements we obtain equations $a_1 = a_2$ and $a_2 = a_3$, thus resulting into equation $a_1 = a_2 = a_3$ which describes the vectors in $K_G(\rho_{\mathrm{triv}},\pi)$. This kernel space thus is the span of the vector $(1,1,1)^\mathrm{T}$. After normalization we obtain
$$
  \Theta_{\mathrm{triv}} = \begin{pmatrix}\frac{1}{\sqrt{3}}\\ \frac{1}{\sqrt{3}}\\ \frac{1}{\sqrt{3}}\end{pmatrix}\,,\quad \hat \Theta_{\mathrm{triv}} = \Theta_{\mathrm{triv}}\otimes I_2 = \begin{pmatrix}\frac{1}{\sqrt{3}}\\ \frac{1}{\sqrt{3}}\\ \frac{1}{\sqrt{3}}\end{pmatrix}\otimes \begin{pmatrix}1& 0 \\ 0 & 1\end{pmatrix} =  \begin{pmatrix}\frac{1}{\sqrt{3}} & 0 \\ 0 & \frac{1}{\sqrt{3}}\\ \frac{1}{\sqrt{3}} & 0 \\ 0 & \frac{1}{\sqrt{3}}\\ \frac{1}{\sqrt{3}} & 0 \\ 0 & \frac{1}{\sqrt{3}}\end{pmatrix}\,.
$$

By a similar procedure, it can be proven that the kernel space for the signum representation is empty. This is connected to the fact that the signum representation is not contained in the defining representation. 

\subsection{Standard condition at the central vertex}\label{sec:standard}
\begin{figure}
\centering
\begin{subfigure}{.3\textwidth}
  \centering\captionsetup{width=.9\linewidth}
  \includegraphics[width=.99\linewidth]{fig1a}
  \caption{Standard coupling}
  \label{fig1a}
\end{subfigure}%
\hspace{0.03\textwidth}
\begin{subfigure}{.3\textwidth}
  \centering\captionsetup{width=.9\linewidth}
  \includegraphics[width=.99\linewidth]{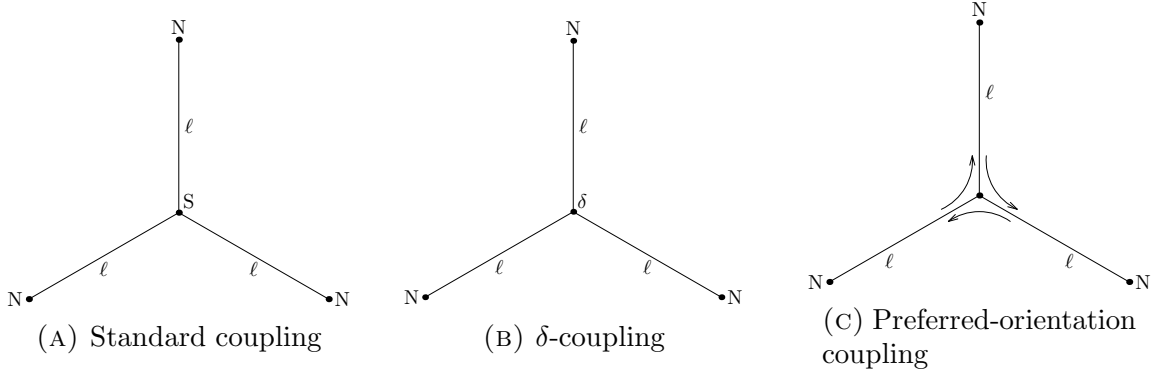}
  \caption{$\delta$-coupling}
  \label{fig1b}
\end{subfigure}
\hspace{0.03\textwidth}
\begin{subfigure}{.3\textwidth}
  \centering\captionsetup{width=.9\linewidth}
  \includegraphics[width=.99\linewidth]{fig1c}
  \caption{Preferred-orientation coupling}
  \label{fig1c}
\end{subfigure}%
\caption{Figures of the three-edge graphs considered in Subsections~\ref{sec:standard}, \ref{sec:delta}, and \ref{sec:preferred}.}
\label{fig1}
\end{figure}

Let us first consider a quantum star graph consisting of three edges of the length~$\ell$ (see Fig.~\ref{fig1a}). We parametrize the edges by the intervals $(0,\ell)$ with $x= 0$ at the loose ends and $x = \ell$ at the central vertex. We assume Neumann boundary conditions at the loose ends and standard coupling at the central vertex. 
\begin{equation}
  f_1'(0) = f_2'(0) = f_3'(0) = 0\,,\quad f_1(\ell) = f_2(\ell) = f_3(\ell) \,,\quad -f_1'(\ell)-f_2'(\ell)-f_3'(\ell)=0\,. \label{eq:ccstandard}
\end{equation}
The matrices $A$ and $B$ corresponding to these coupling conditions are
$$
  A = \begin{pmatrix}0 & 0 & 0 & 0 & 0 & 0\\0 & 0 & 0 & 0 & 0 & 0\\0 & 0 & 0 & 0 & 0 & 0\\0 & 1 & 0 & -1 & 0 & 0\\0 & 1 & 0 & 0 & 0 & -1\\0 & 0 & 0 & 0 & 0 & 0\end{pmatrix}\,,\quad B = \begin{pmatrix}1 & 0 & 0 & 0 & 0 & 0\\0 & 0 & 1 & 0 & 0 & 0\\0 & 0 & 0 & 0 & 1 & 0\\0 & 0 & 0 & 0 & 0 & 0\\0 & 0 & 0 & 0 & 0 & 0\\0 & 1 & 0 & 1 & 0 & 1\end{pmatrix}\,.
$$
Hence we obtain
$$
  \tilde A = \begin{pmatrix}0 & 0 & 0 & 0 & 0 & 0\\0 & \frac{2}{3} & 0 & -\frac{1}{3} & 0 & -\frac{1}{3}\\0 & 0 & 0 & 0 & 0 & 0\\0 & -\frac{1}{3} & 0 & \frac{2}{3} & 0 & -\frac{1}{3}\\0 & 0 & 0 & 0 & 0 & 0\\0 & -\frac{1}{3} & 0 & -\frac{1}{3} & 0 & \frac{2}{3}\end{pmatrix}\,,\quad \tilde B = \begin{pmatrix}-i & 0 & 0 & 0 & 0 & 0\\0 & -\frac{i}{3} & 0 &  -\frac{i}{3} & 0 &  -\frac{i}{3}\\0 & 0 & -i & 0 & 0 & 0\\0 & -\frac{i}{3} & 0 &  -\frac{i}{3} & 0 &  -\frac{i}{3}\\0 & 0 & 0 & 0 & -i & 0\\0 & -\frac{i}{3} & 0 &  -\frac{i}{3} & 0 &  -\frac{i}{3}\end{pmatrix} \,.
$$

Using Definition~\ref{def:band}  we obtain for the orthogonal representation
$$
  A_{\rho_\mathrm{orth}} = \hat \Theta_{\mathrm{orth}}^\dagger (I_2\otimes \tilde A) \Theta_{\mathrm{orth}} =\begin{pmatrix}0 & 0 \\ 0 & 1\end{pmatrix}\,,\quad  B_{\rho_\mathrm{orth}} = \hat \Theta_{\mathrm{orth}}^\dagger (I_2\otimes \tilde B) \Theta_{\mathrm{orth}} =\begin{pmatrix}-i & 0 \\ 0 & 0\end{pmatrix}\,.
$$
and for the trivial representation
$$
  A_{\rho_\mathrm{triv}} = \hat \Theta_{\mathrm{triv}}^\dagger  \tilde A \Theta_{\mathrm{triv}} =\begin{pmatrix}0 & 0 \\ 0 & 0\end{pmatrix}\,,\quad  B_{\rho_\mathrm{triv}} = \hat \Theta_{\mathrm{triv}}^\dagger \tilde B \Theta_{\mathrm{triv}} =\begin{pmatrix}-i & 0 \\ 0 & -i\end{pmatrix}\,.
$$

The graph $\Gamma_\rho$ is for the trivial representation the segment $(0,\ell)$; for the orthogonal representation we obtain two copies of this segment. The coupling conditions of the graphs $\Gamma_\rho$ are given by condition~\eqref{eq:qg:coup2} with $\Psi = \begin{pmatrix}f(0)\\f(\ell)\end{pmatrix}$ and $\Psi' = \begin{pmatrix}f'(0)\\-f'(\ell)\end{pmatrix}$, where $f$ denotes the wavefunction on the segment. For the orthogonal representation, the coupling condition~\eqref{eq:qg:coup2} with the coupling matrices $A_{\rho_\mathrm{orth}}$ and $B_{\rho_\mathrm{orth}}$ gives $f'(0) = 0$ and $f(\ell) = 0$, i.e. the Neumann boundary condition at one end and Dirichlet at the other. There are two copies of this graph since the dimension of the representation is two. The coupling matrices for the trivial representation follow from the coupling condition~\eqref{eq:qg:coup2} with the matrices $A_{\rho_\mathrm{triv}}$ and $B_{\rho_\mathrm{triv}}$. We obtain $f'(0)=0$ and $f'(\ell) =0$, which corresponds to Neumann boundary conditions at both ends of the segment. Since the kernel space for the signum representation is trivial, the graph $\Gamma_\rho$ is in this case empty. 

\subsection{$\delta$-condition at the central vertex}\label{sec:delta}
In the second example, we consider the same graph as in Subsection~\ref{sec:standard}, only the coupling condition at the central vertex is replaced by the so-called $\delta$-condition of the strength $\alpha\in\mathbb{R}$ (see Fig.~\ref{fig1b}). 
$$
  f_1'(0) = f_2'(0) = f_3'(0) = 0\,,\quad f_1(\ell) = f_2(\ell) = f_3(\ell) \,,\quad -f_1'(\ell)-f_2'(\ell)-f_3'(\ell)=\alpha f_1(\ell)\,. 
$$

The corresponding coupling matrices read as follows.
$$
  A = \begin{pmatrix}0 & 0 & 0 & 0 & 0 & 0\\0 & 0 & 0 & 0 & 0 & 0\\0 & 0 & 0 & 0 & 0 & 0\\0 & 1 & 0 & -1 & 0 & 0\\0 & 1 & 0 & 0 & 0 & -1\\0 & -\alpha & 0 & 0 & 0 & 0\end{pmatrix}\,,\quad B = \begin{pmatrix}1 & 0 & 0 & 0 & 0 & 0\\0 & 0 & 1 & 0 & 0 & 0\\0 & 0 & 0 & 0 & 1 & 0\\0 & 0 & 0 & 0 & 0 & 0\\0 & 0 & 0 & 0 & 0 & 0\\0 & 1 & 0 & 1 & 0 & 1\end{pmatrix}\,.
$$
Hence we obtain
$$
  \tilde A = \begin{pmatrix}0 & 0 & 0 & 0 & 0 & 0 \\ 0 & \frac{\alpha-2i}{\alpha-3i} & 0 & \frac{i}{\alpha-3i} & 0 & \frac{i}{\alpha-3i} \\0 & 0 & 0 & 0 & 0 & 0 \\ 0 & \frac{i}{\alpha-3i} & 0 & \frac{\alpha-2i}{\alpha-3i} & 0 & \frac{i}{\alpha-3i}\\0 & 0 & 0 & 0 & 0 & 0 \\ 0 & \frac{i}{\alpha-3i} & 0 & \frac{i}{\alpha-3i} & 0 & \frac{\alpha-2i}{\alpha-3i} \end{pmatrix}\,,\quad \tilde B = \begin{pmatrix}-i & 0 & 0 & 0 & 0 & 0 \\ 0 & \frac{-1}{\alpha-3i} & 0 & \frac{-1}{\alpha-3i} & 0 & \frac{-1}{\alpha-3i} \\0 & 0 & -i & 0 & 0 & 0 \\ 0 & \frac{-1}{\alpha-3i} & 0 & \frac{-1}{\alpha-3i} & 0 & \frac{-1}{\alpha-3i}\\0 & 0 & 0 & 0 & -i & 0 \\ 0 & \frac{-1}{\alpha-3i} & 0 & \frac{-1}{\alpha-3i} & 0 & \frac{-1}{\alpha-3i}\end{pmatrix} \,.
$$

From Definition~\ref{def:band} we obtain for the orthogonal representation
$$
  A_{\rho_\mathrm{orth}} = \hat \Theta_{\mathrm{orth}}^\dagger (I_2\otimes \tilde A) \Theta_{\mathrm{orth}} =\begin{pmatrix}0 & 0 \\ 0 & 1\end{pmatrix}\,,\quad  B_{\rho_\mathrm{orth}} = \hat \Theta_{\mathrm{orth}}^\dagger (I_2\otimes \tilde B) \Theta_{\mathrm{orth}} =\begin{pmatrix}-i & 0 \\ 0 & 0\end{pmatrix}\,.
$$
and for the trivial representation
$$
  A_{\rho_\mathrm{triv}} = \hat \Theta_{\mathrm{triv}}^\dagger  \tilde A \Theta_{\mathrm{triv}} =\begin{pmatrix}0 & 0 \\ 0 & \frac{\alpha}{\alpha-3i}\end{pmatrix}\,,\quad  B_{\rho_\mathrm{triv}} = \hat \Theta_{\mathrm{triv}}^\dagger \tilde B \Theta_{\mathrm{triv}} =\begin{pmatrix}-i & 0 \\ 0 &\frac{-3}{\alpha-3i}\end{pmatrix}\,.
$$

Similarly to the previous example, the orthogonal representation gives two copies of the quotient graph with Neumann boundary condition at one end and Dirichlet at the other. The trivial representation leads to the segment with Neumann boundary condition at one end and Robin boundary condition with the coupling parameter $\alpha/3$
$$
  \alpha g(\ell)-3 (-g'(\ell)) = 0\quad \Rightarrow \quad -g'(\ell) = \frac{\alpha}{3} g(\ell)
$$
at the other end. The signum representation gives, as in the previous section, the empty graph.

\subsection{Preferred-orientation coupling at the central vertex}\label{sec:preferred}
In the last example, we consider the coupling condition of preferred orientation at the central vertex, earlier studied in \cite{ExnerTater, ExnerLipovsky1, ExnerLipovsky2, Baradaran}. This coupling condition, motivated by application to modeling quantum Hall effect was first used in \cite{ExnerTater}. For the particular energy $E = 1$ the wave coming from one edge is fully transmitted to the neighbouring edge, the wave coming from this edge is fully transmitted to the next edge, etc. cyclically (see Figure~\ref{fig1c}). It was found that the transport properties of the preferred orientation coupling depend on the parity of the vertex (i.e. whether the vertex degree is even or odd). The vertex coupling matrices are $A_v = U_v-I$ and $B_v = i(U_v+I)$ with $U = \begin{pmatrix}0 & 1 & 0\\ 0 & 0 & 1\\ 1 & 0 & 0\end{pmatrix}$. The boundary conditions at the loose ends will again be Neumann.

The coupling matrices of the whole graph are
$$
  A = \begin{pmatrix}0 & 0 & 0 & 0 & 0 & 0\\0 & 0 & 0 & 0 & 0 & 0\\0 & 0 & 0 & 0 & 0 & 0\\0 & -1 & 0 & 1 & 0 & 0\\ 0 & 0 & 0 & -1 & 0 & 1\\ 0 & 1 & 0 & 0 & 0 & -1\end{pmatrix}\,,\quad B = \begin{pmatrix}1 & 0 & 0 & 0 & 0 & 0\\0 & 0 & 1 & 0 & 0 & 0\\0 & 0 & 0 & 0 & 1 & 0\\0 & i & 0 & i & 0 & 0\\0 & 0 & 0 & i & 0 & i\\0 & i & 0 & 0 & 0 & i\end{pmatrix}\,.
$$

We have 
$$
  \tilde A = \begin{pmatrix}0 & 0 & 0 & 0 & 0 & 0 \\ 0 & \frac{1}{2} & 0 & -\frac{1}{2} & 0 & 0 \\ 0 & 0 & 0 & 0 & 0 & 0 \\0 & 0 & 0 & \frac{1}{2} & 0 & -\frac{1}{2} \\0 & 0 & 0 & 0 & 0 & 0 \\0 & -\frac{1}{2} & 0 & 0 & 0 & \frac{1}{2}  \end{pmatrix}\,,\quad \tilde B = \begin{pmatrix}-i & 0 & 0 & 0 & 0 & 0 \\ 0 & -\frac{i}{2} & 0 & -\frac{i}{2} & 0 & 0 \\0 & 0 & -i & 0 & 0 & 0 \\0 & 0 & 0 &  -\frac{i}{2} & 0 &  -\frac{i}{2} \\0 & 0 & 0 & 0 & -i & 0 \\0 & -\frac{i}{2} & 0 & 0 & 0 & -\frac{i}{2} \end{pmatrix} \,.
$$

However, one cannot use the same symmetry group as in the previous two examples. The graph is no longer symmetric with respect to the symmetry group $S_3$ since e.g. interchanging two edges would change the direction of the wave for $E=1$. From the former group $S_3$ only the elements $[123]$ (identity), $[231]$ and $[312]$ (cyclic permutations) do not change the symmetry of the graph (note that all these permutations are even). The symmetry of the graph is, therefore, $C_3$. It has three elements, the identity, the rotation (denoted by $a$) by the angle $2\pi/3$ and its inverse element $a^{-1}$, i.e. the rotation by the angle $-2\pi/3$. The group has three one-dimensional irreducible representations, its character table is given in Table~\ref{tab:c3}.

\begin{table}[h]
\begin{tabular}{|c||c||c||c|}
\hline
& 1 & $a$ & $a^{-1}$\\ \hline
$\chi_{1}$ & 1 & 1 & 1\\ \hline
$\chi_{2}$ & 1 & $\omega$ & $\bar\omega$\\ \hline
$\chi_{3}$ & 1 & $\bar\omega$ & $\omega$\\ \hline
\end{tabular}
\caption{Character table of the group $C_3$. Here, $\omega = \mathrm{e}^{2\pi i/3}$, $\bar\omega = \mathrm{e}^{-2\pi i/3}$.}
\label{tab:c3}
\end{table}

We proceed similarly as with the group $S_3$ -- we find the representation $\pi$ and the three irreducible representations that are identical to the above characters. Then we apply the procedure from Subsection~\ref{sec:procedure} to find the matrices $\Theta$ and $\hat \Theta$. Finally, we obtain coupling matrices of the quotient graphs $A_\rho$ and $B_\rho$. We list the results.

The representation $\pi$ is
$$
\pi(1) = \begin{pmatrix}1 & 0 & 0\\ 0 & 1 & 0\\ 0 & 0 & 1\end{pmatrix}\,,\quad \pi(a) = \begin{pmatrix}0 & 0 & 1\\ 1 & 0 & 0\\ 0 & 1 & 0\end{pmatrix}\,,\quad \pi(a^{-1}) = \begin{pmatrix}0 & 1 & 0\\ 0 & 0 & 1\\ 1 & 0 & 0\end{pmatrix}\,.\\
$$
The irreducible representations are
\begin{eqnarray*}
\rho_1(1) = 1\,,\quad & \rho_1(a) = 1\,,\quad & \rho_1(a^{-1})=1\,,\\
\rho_2(1) = 1\,,\quad & \rho_2(a) = \mathrm{e}^{2\pi i /3}\,,\quad & \rho_2(a^{-1})=\mathrm{e}^{-2\pi i /3}\,,\\
\rho_3(1) = 1\,,\quad & \rho_3(a) = \mathrm{e}^{-2\pi i /3}\,,\quad & \rho_3(a^{-1})=\mathrm{e}^{2\pi i /3}\,.\\  
\end{eqnarray*}

All the graphs $\Gamma_\rho$ are segments of the length~$\ell$. Below, we obtain their boundary conditions. For the first irreducible representation, we get
$$
  \Theta_1 = \begin{pmatrix}-\frac{1}{\sqrt{3}}\\-\frac{1}{\sqrt{3}}\\-\frac{1}{\sqrt{3}}\end{pmatrix}\,,\quad \hat \Theta_1 = \begin{pmatrix}-\frac{1}{\sqrt{3}} & 0 \\ 0 & -\frac{1}{\sqrt{3}}\\ -\frac{1}{\sqrt{3}} & 0\\ 0 & -\frac{1}{\sqrt{3}}\\ -\frac{1}{\sqrt{3}}& 0 \\ 0 & -\frac{1}{\sqrt{3}}\end{pmatrix}\,,\quad A_{\rho_1} = \begin{pmatrix}0 & 0 \\ 0 & 0\end{pmatrix}\,,\quad B_{\rho_1} = \begin{pmatrix}-i & 0\\ 0  & -i\end{pmatrix}\,.
$$
This corresponds to the Neumann boundary condition at both ends of the interval.

The second and third representations yield ($\omega = \mathrm{e}^{2\pi i/3}$, $\bar\omega = \mathrm{e}^{-2\pi i/3}$)
$$
  \Theta_2 = \begin{pmatrix}-\frac{1}{\sqrt{3}}\\-\frac{\bar \omega}{\sqrt{3}}\\-\frac{\omega}{\sqrt{3}}\end{pmatrix}\,,\quad \hat \Theta_2 = \begin{pmatrix}-\frac{1}{\sqrt{3}} & 0 \\ 0 & -\frac{1}{\sqrt{3}}\\ -\frac{\bar\omega}{\sqrt{3}} & 0\\ 0 & -\frac{\bar\omega}{\sqrt{3}}\\ -\frac{\omega}{\sqrt{3}}& 0 \\ 0 & -\frac{\omega}{\sqrt{3}}\end{pmatrix}\,,\quad A_{\rho_1} = \begin{pmatrix}0 & 0 \\ 0 & \frac{1}{2}(1-\bar\omega)\end{pmatrix}\,,\quad B_{\rho_1} = \begin{pmatrix}-i & 0\\ 0  & \frac{-i}{2}(1+\bar\omega)\end{pmatrix}\,.
$$
$$
  \Theta_3 = \begin{pmatrix}-\frac{1}{\sqrt{3}}\\-\frac{\omega}{\sqrt{3}}\\-\frac{\bar\omega}{\sqrt{3}}\end{pmatrix}\,,\quad \hat \Theta_3 = \begin{pmatrix}-\frac{1}{\sqrt{3}} & 0 \\ 0 & -\frac{1}{\sqrt{3}}\\ -\frac{\omega}{\sqrt{3}} & 0\\ 0 & -\frac{\omega}{\sqrt{3}}\\ -\frac{\bar\omega}{\sqrt{3}}& 0 \\ 0 & -\frac{\bar\omega}{\sqrt{3}}\end{pmatrix}\,,\quad A_{\rho_3} = \begin{pmatrix}0 & 0 \\ 0 & \frac{1}{2}(1-\omega)\end{pmatrix}\,,\quad B_{\rho_3} = \begin{pmatrix}-i & 0\\ 0  & \frac{-i}{2}(1+\omega)\end{pmatrix}\,.
$$
These coupling matrices correspond to the Neumann boundary condition at one end of the segment and Robin condition with the coefficient $\pm \sqrt{3}$ at the other. For the second representation, we have
$$
  g'(0) =0\,,\quad -g'(\ell) = \frac{1}{i}\frac{1-\bar\omega}{1+\bar\omega} g(\ell) = \sqrt{3}g(\ell) 
$$
and for the third
$$
  g'(0) =0\,,\quad -g'(\ell) = \frac{1}{i}\frac{1-\omega}{1+\omega} g(\ell) = -\sqrt{3}g(\ell) \,.
$$

\section{Conclusions}\label{sec:conclusions}
We have illustrated the usage of the quotient graph method on three-edge star graphs. For the graph with Neumann boundary condition at the loose ends and standard coupling at the central vertex, we obtained three segments of lengths~$\ell$, one with Neumann boundary condition at both ends, two with Neumann boundary condition at one end, and Dirichlet at the other end. For the graph with Neumann boundary condition at the loose ends and $\delta$-condition at the central vertex, we again obtained two copies of the segment of length~$\ell$ with Neumann and Dirichlet conditions at the opposite ends; the third quotient graph is a segment of length~$\ell$ with Neumann boundary condition at one end, and Robin condition (with the coupling parameter $\alpha/3$) at the other end. The example with the preferred-orientation coupling is symmetric under the $C_3$ symmetry group and its quotient graphs are the segments of length~$\ell$, one with Neumann condition at both ends, the two other with Neumann condition at one end and Robin (with the parameter $\pm \sqrt{3}$) at the other end.

We should stress that the above results can be obtained also without the machinery of~\cite{BandBerkolaikoJoynerLiu}. The trivial representation corresponds to the symmetric subspace of the domain of the Hamiltonian and the orthogonal representation (or, in the case of preferred-orientation coupling the representations $\chi_2$ and $\chi_3$) correspond to the two-dimensional subspace of antisymmetric functions. However, the current note can serve as a simple but non-trivial example of the quotient graph theory for quantum graphs and together with the original paper~\cite{BandBerkolaikoJoynerLiu} can teach the reader the procedures necessary for dealing with more complicated problems.

Finally, let us illustrate how the relation $H \cong \bigoplus_{\rho}  H_\rho^{\oplus r(\rho)}$ can be obtained in case of the graph with standard coupling at the central vertex. Let the wavefunction components of the three-edge graph be $f_1$, $f_2$, $f_3$. The domain of the Hamiltonian on the three-edge graph can be decomposed into the symmetric subspace (represented by $h_{\mathrm{sym}}(x) = \frac{1}{\sqrt{3}}(f_1(x)+f_2(x)+f_3(x))$ with $x\in (0,\ell)$ and corresponding to the trivial representation) and the two-dimensional anti-symmetric subspace (represented by $h_{\mathrm{ant1}}(x) = \frac{1}{\sqrt{2}}(f_1(x)-f_2(x))$ and $h_{\mathrm{ant2}}(x) = \frac{1}{\sqrt{2}}(f_1(x)-f_3(x))$ with $x\in (0,\ell)$, corresponding to the orthogonal representation). The coupling conditions on the three-edge graph~\eqref{eq:ccstandard} yield
\begin{eqnarray*}
  h_{\mathrm{sym}}'(0) &=& \frac{1}{\sqrt{3}}(f_1'(0)+f_2'(0)+f_3'(0)) = 0\,,\\
  h_{\mathrm{sym}}'(\ell) &=& \frac{1}{\sqrt{3}}(f_1'(\ell)+f_2'(\ell)+f_3'(\ell)) = 0\,,\\
  h_{\mathrm{ant1}}'(0) &=& \frac{1}{\sqrt{2}}(f_1'(0)-f_2'(0)) = 0\,,\\
  h_{\mathrm{ant1}}(\ell) &=& \frac{1}{\sqrt{2}}(f_1(\ell)-f_2(\ell)) = 0\,,\\
  h_{\mathrm{ant2}}'(0) &=& \frac{1}{\sqrt{2}}(f_1'(0)-f_3'(0)) = 0\,,\\
  h_{\mathrm{ant2}}(\ell) &=& \frac{1}{\sqrt{2}}(f_1(\ell)-f_3(\ell)) = 0\,.
\end{eqnarray*}
Therefore, we show that the symmetric subspace corresponds to the segment with Neumann boundary conditions at both ends and the antisymmetric subspace to two copies of the segment with Neumann condition at one end and Dirichlet at the other. The Hamiltonian on the former three-edge graph is unitarily equivalent to the orthogonal sum of the three mentioned operators.

\section*{Acknowledgements}
J.L. was supported by the Research Programme ``Mathematical Physics and Differential Geometry'' of the Faculty of Science of the University of Hradec Kr\'alov\'e. The authors thank Ram Band for the suggestions that improved the paper.


\begin{thebibliography}{99}
\bibitem{Pauling} L. Pauling, The diamagnetic anisotropy of aromatic molecules, \emph{J. Chem. Phys.} {\bf 4} (1936), pp. 673--677. DOI: 10.1063/1.1749766.

\bibitem{RuedenbergScherr} K. Ruedenberg, C. Scherr, Free-electron network model for conjugated systems, {I}. {T}heory, \emph{J. Chem. Phys.} {\bf 21} (1953), pp. 1565--1581. DOI: 10.1063/1.1699299.

\bibitem{KottosSmilansky} T. Kottos, U. Smilansky, Quantum chaos on graphs, \emph{Phys. Rev. Lett} {\bf 79} (1997), 4794--4797. DOI: 10.1103/PhysRevLett.79.4794. 

\bibitem{BerkolaikoBogomolny} G. Berkolaiko, E.B. Bogomolny, J.P. Keating, Star graphs and \v{S}eba billiards, \emph{J. Phys. A} {\bf 34} (2001), 335--350. DOI: 10.1088/0305-4470/34/3/301.

\bibitem{HulEtAl} O. Hul, S. Bauch, P. Pako\'nski, N. Savytskyy, K. \. Zyczkowski, and L. Sirko, Experimental simulation of quantum graphs by microwave networks, \emph{Phys. Rev. E} {\bf 69} (2004), 056205. DOI: 10.1103/PhysRevE.69.056205.

\bibitem{HulEtAl2} O. Hul, M. {\L}awniczak, S. Bauch, A. Sawicki, M. Ku\'s and L. Sirko, Are Scattering Properties of Graphs Uniquely Connected to Their Shapes?, \emph{Phys. Rev. Lett.} {\bf 109} (2012), 040402. DOI: 10.1103/PhysRevLett.109.040402.

\bibitem{LawniczakLipovskySirko} M. {\L}awniczak, J. Lipovsk\'y, and L. Sirko, Non-Weyl microwave graphs, \emph{Phys. Rev. Lett.} {\bf 122} (2019), 140503. DOI: 10.1103/PhysRevLett.122.140503.

\bibitem{LawniczakKurasov} M. {\L}awniczak, P. Kurasov, S. Bauch, M. Bia{\l}ous, V. Yunko, and L. Sirko, Hearing Euler characteristic of graphs, \emph{Phys. Rev. E} {\bf 101} (2020), 052320. DOI: 10.1103/PhysRevE.101.052320.

\bibitem{LawniczakLipovskyBialousSirko} M. {\L}awniczak, J. Lipovsk\'y, M. Bia{\l}ous, L. Sirko, Application of topological resonances in experimental investigation of a Fermi golden rule in microwave networks, \emph{Phys. Rev. E} {\bf 103} (2021), 032208. DOI: 10.1103/PhysRevE.103.032208.

\bibitem{BandBerkolaikoJoynerLiu} R. Band, G. Berkolaiko, C. H. Joyner, W. Liu, Quotients of finite-dimensional operators by symmetry representations, arXiv preprint, arXiv:1711.00918 [math-ph]. 

\bibitem{BandParzanchevskiBen-Shach} R. Band, O. Parzanchevski, G. Ben-Shach, The Isospectral Fruits of Representation Theory: Quantum Graphs and Drums, \emph{J. Phys. A: Math. Theor.} {\bf 42} (2009), 17520. DOI: 10.1088/1751-8113/42/17/175202.

\bibitem{BandParzanchevski} O. Parzanchevski, R. Band, Linear Representations and Isospectrality with Boundary Conditions, \emph{J. Geom. Anal.} {\bf 20} (2010), p. 439--471. DOI: 10.1007/s12220-009-9115-6.

\bibitem{ExnerTater} P. Exner, M. Tater, Quantum graphs with vertices of a preferred orientation, \emph{Phys. Lett. A} {\bf 382} (2018), pp. 283--287. DOI: j.physleta.2017.11.028.

\bibitem{ExnerLipovsky1} P. Exner, J. Lipovsk\'y, Spectral asymptotics of the Laplacian on Platonic solids graphs, \emph{J. Math. Phys.} {\bf 60} (2019), 122101. DOI: 10.1063/1.5116100.

\bibitem{JoynerMullerSieber} C. H. Joyner, S. M\"uller, and M. Sieber, GSE statistics without spin, \emph{EPL} \textbf{107} (2014), 50004. DOI: 10.1209/0295-5075/107/50004.

\bibitem{FultonHarris} W. Fulton, J. Harris, \emph{Representation Theory}, Graduate Texts in Mathematics 129, Springer, New York, 2004, 551 pp. ISBN: 978-1-4612-0979-9. DOI: 10.1007/978-1-4612-0979-9.

\bibitem{BarutRaczka} A. O. Barut, R. R\c{a}czka, \emph{Theory of Group Representations and Applications}, Default Book Series, World Scientific, 1986,
740 pp. ISBN: 9789971502164. DOI: 10.1142/0352 

\bibitem{Cotton} F. A. Cotton, \emph{Chemical Applications of Group Theory}, 3rd Edition, Wiley, New York, 1990, 461 pp. ISBN: 978-0-471-51094-9. 
  
\bibitem{BerkolaikoKuchment} G. Berkolaiko, P. Kuchment, \emph{Introduction to Quantum Graphs}, Mathematical Surveys and Monographs 186. AMS, 2013, 270 pp, ISBN 978-0-8218-9211-4. DOI: 10.1090/surv/186.

\bibitem{GnutzmannSmilansky} S. Gnutzmann, U. Smilansky,  Quantum graphs: Applications to quantum chaos and universal spectral statistics, \emph{Advances in Physics} {\bf 55} (2006), 527--625. DOI: 10.1080/00018730600908042 

\bibitem{unapologetic} The Unapologetic Mathematician, \emph{One Complete Character Table (part 2)}, blog post, \texttt{https://unapologetic.wordpress.com/2010/10/27/one-complete-character-table-part-2/}, published 2010, retrieved 27th June 2021.

\bibitem{Diaconis} P. Diaconis, \emph{Group representations in probability and statistics}, IMS Lecture Notes Monogr. Ser., 11, 1988, 198pp ISBN: 0940600145. DOI: 10.1214/lnms/1215467407. 

\bibitem{ExnerLipovsky2} P. Exner, J. Lipovsk\'y, Topological bulk-edge effects in quantum graph transport, \emph{Phys. Lett. A} {\bf 384} (2020), 126390. DOI: /10.1016/j.physleta.2020.126390.

\bibitem{Baradaran} M. Baradaran, P. Exner, M. Tater, Ring chains with vertex coupling of a preferred orientation, \emph{Rev. Math. Phys.} {\bf 32} (2020), 2060005. DOI: 10.1142/S0129055X20600053.
\end{thebibliography}
\end{document}